\documentclass[a4paper]{article}
\usepackage{amsmath,amsthm,amssymb, mathrsfs}

\usepackage{fullpage}
\usepackage{enumerate}
\usepackage{setspace}
\doublespacing
\usepackage{listings}
\usepackage{tikz-cd}
\usepackage{tikz}
\usepackage{tocbibind}
\usepackage{comment}
\usepackage{mathtools}
\usepackage[english]{babel}
\usepackage[autostyle]{csquotes}
\usepackage[
  backend=biber,
style=ieee]{biblatex}
\usepackage{hyperref}
\addbibresource{output.bib}

\newcommand{\Co}{\mathbb{C}}
\newcommand{\Z}{\mathbb{Z}}
\newcommand{\N}{\mathbb{N}}

\def \r {\rightarrow}

\def \lan {\langle}
\def \ran {\rangle}

\def \meet {\wedge}
\def \join {\vee}
\def \De {\Delta}

\newtheorem{theorem}{Theorem}[subsection]
\newtheorem{defin}[theorem]{Definition}
\newtheorem{lemma}[theorem]{Lemma}

\newtheorem{exmp}[theorem]{Example}

\newtheorem{prop}[theorem]{Proposition}

\title{An Error Correctable Implication Algebra for a System of Qubits}
\author{Morrison Turnansky\thanks{  mturnansky@gmail.com}}

\begin{document}

\maketitle
\begin{abstract}
  We present the Lukasiewicz logic as a viable system for an
  implication algebra on a system of qubits. Our results show that
  the three valued Lukasiewicz logic can be embedded in the
  stabilized space of an arbitrary quantum error correcting stabilizer code. We then fully
  characterize the non trivial errors that may occur up to group
  isomorphism. Lastly, we demonstrate by explicit algorithmic example,
  how any algorithm consistent with the Lukasiewicz logic can
  immediately run on a quantum system and utilize the indeterminate state.
  \newline

  \noindent ACM-class: F.4.1\\
  MSC Subject Classifications: 81P68 (Primary) 94B05, 81R15, 46L60,
  06B15 (Secondary)
\end{abstract}
\newpage
\section{The Measurement of Pauli Gates and MV Algebra}
Modern computation and boolean algebras are intrinsically linked. The
behavior of trillions of transistors is completely described and
represented by a
corresponding boolean algebra due to the strict property that each
transistor can be in exactly one of two states. As we move forward to quantum
computation, we gain an additional indeterminate state. In this
section, we present a number of structures for three
valued logics and algebras and highlight their interrelationships.
We demonstrate that a Lukasiewicz logic relates to an algebraic qubit
system in an equivalent way to how
a Boolean logic relates to $\{0,1\}$. Finally, we will give a
physical realization of the Lukasiewicz logic as a series of quantum gates.

\subsection{Introduction to three valued logics}
When considering a starting point for a 3 valued algebraic system,
there are a number of choices.  We start with the more general structure
and progressively increase specialization to support our desired applications.

\begin{defin}[\cite{mv-algebra-def}]
  An MV-algebra is an algebraic structure $\lan A, \oplus, \neg, 0
  \ran$ consisting of a nonempty set A, a binary operation $\oplus: A
  \times A \r A$, a unary operation $\neg: A \r A$, and a constant
  $0$ denoting a fixed element of $A$, which satisfies the following identities:
  \begin{enumerate}
    \item $(x \oplus y) \oplus z = x \oplus (y \oplus z)$
    \item $x \oplus 0 = x$
    \item $x \oplus y = y \oplus x$
    \item $\neg \neg x = x$
    \item $x \oplus \neg 0 = \neg 0$
    \item $\neg(\neg x \oplus y) \oplus y = \neg(\neg y \oplus x ) \oplus x$.
  \end{enumerate}
\end{defin}

\begin{defin}
  Let $A = \{0, \frac{1}{2}, 1\}$,  $\oplus A \times A \r A$ defined
  by $(x,y) \mapsto min(x+y, 1)$, and $^\neg:A \r A$ defined by $x
  \mapsto  1 - x$. The standard  $MV_3$ algebra is defined as
  $\mathscr{A}  = (A, \oplus, \neg)$.
\end{defin}

\noindent We can now define a Lukasiewicz logic both as an independent
structure and as
a specific instance of a $MV_3$ algebra.

\begin{defin}[\cite{logic_many_valued}]
  Lukasiewicz logics, denoted by $L_m$ in the finite case and
  $L_\infty$ in the infinite case, are defined by the logical matrix,
  which has either some finite set $W_m = \{\frac{k}{m-1} |  0 \le k
  \le m - 1 \}$ of rationals within the real unit interval or the
  whole interval $W_\infty = [0,1] = \{x \in \mathbb{R} | 0 \le x \le
  1\}$ as the truth degree set. The degree 1 is the only designated
  truth degree. The main connectives of these systems are a strong
  and weak conjunctive, $\&$ and $\meet$, respectively, given by the
  truth degree functions
  \begin{align*}
    u \, \& \, v &= max\{0 , u + v - 1\}\\
    u \meet v &= min\{u, v\},
  \end{align*}
  a negation connective $^\neg$ determined by
  \[\neg u = 1 - u,\]
  and an implicative connective $\r$ with the truth degree function
  \[u \r v = min\{1 , 1 - u + v\}.\]
\end{defin}

\begin{prop}
  The standard $MV_3$ algebra, $\mathscr{A}$, with the addition of
  the axiom $x \oplus x \oplus x = x \oplus x$, defines the three
  valued Lukasiewicz logic.
\end{prop}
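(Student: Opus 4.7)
The plan is to verify two claims in sequence: first, that the standard $MV_3$ algebra $\mathscr{A}$ actually satisfies the extra identity $x \oplus x \oplus x = x \oplus x$; and second, that the four Lukasiewicz connectives $\neg$, $\&$, $\meet$, and $\r$ are term-definable from $\oplus$ and $\neg$ on the carrier $W_3 = \{0, \frac{1}{2}, 1\}$ with truth tables matching the Lukasiewicz logic specification.

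For the first claim I would proceed by cases on $x \in \{0, \frac{1}{2}, 1\}$. The cases $x = 0$ and $x = 1$ are immediate. The only interesting case is $x = \frac{1}{2}$, where $\frac{1}{2} \oplus \frac{1}{2} = \min\{1,1\} = 1$ already saturates at the top, so no further copy of $\frac{1}{2}$ can change the value; both sides equal $1$.

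For the second claim I would exhibit the Lukasiewicz connectives as the familiar MV-algebra terms and check that each one reduces to the truth function given in the definition by a small case table on $W_3$. Negation is inherited directly. Implication is defined by $u \r v := \neg u \oplus v$, which unfolds to $\min\{1,\, 1 - u + v\}$. Strong conjunction is $u \,\&\, v := \neg(\neg u \oplus \neg v)$, which unfolds to $1 - \min\{1,(1-u)+(1-v)\} = \max\{0,\, u + v - 1\}$. The MV-algebra join, $u \join v := \neg(\neg u \oplus v) \oplus v$, evaluates to $\max\{u,v\}$ on the chain (its commutativity is exactly axiom 6), and weak conjunction is its De Morgan dual $u \meet v := \neg(\neg u \join \neg v)$, which yields $\min\{u,v\}$. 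Each of these verifications is at most a nine-line table.

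The main point here, more conceptual than a technical obstacle, is the \emph{role} played by the added axiom. The MV-algebra axioms 1--6 alone are satisfied by MV-chains of every finite size, so they do not pin down $L_3$; the identity $x \oplus x \oplus x = x \oplus x$ is the instance $n = 2$ of the standard axiom $n \cdot x = (n+1) \cdot x$ that restricts the generating chain to three elements. I would close by remarking that $\mathscr{A}$ trivially satisfies it, by the case analysis above, so the added axiom is consistent and its inclusion is precisely what selects the three-valued Lukasiewicz logic out of the full MV variety.
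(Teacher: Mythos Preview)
Your proposal is correct and follows essentially the same approach as the paper: identify the carriers and negations, then show the Lukasiewicz connectives are term-definable from $\oplus$ and $\neg$. Your version is in fact more complete than the paper's, which only writes out the verification for implication ($\neg x \oplus y = x \r y$) and does not separately check $\&$, $\meet$, the extra axiom, or its role in singling out the three-element chain.
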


\begin{proof}
  By definition, we see that the states of $\mathscr{A}$ are exactly
  the states of $L_3$. Furthermore, negation of $L_3$ is precisely
  the same as negation in $\mathscr{A}$, so we only need to show
  implication may be defined in terms of $\oplus$ and $\neg$.

  Now we claim that $\neg x \oplus y = x \r y$:
  \begin{align*}
    \neg x \oplus y &= min(1, \neg x + y) \\
    &= min(1, (1-x) + y)\\
    &= min(1, 1-x+ y)\\
    &= x \r y.
  \end{align*}
\end{proof}

\begin{prop}[\cite{paraconsistency}]
  As a direct derivation of the axioms of a Lukasiewicz logic and the
  definition of
  $\rightarrow$, we include the following equations for completeness.
  \begin{align*}
    0 &= \neg 1\\
    x^2 &= x \odot x\\
    x \ominus y &= x \odot \neg y\\
    x \oplus y &= \neg (\neg x \odot \neg y)\\
    x \rightarrow y &= \neg x \oplus y\\
    | x - y | &= (x \ominus y) \oplus (y \ominus x)\\
    2x &= x \oplus x\\
    x \leftrightarrow y &= (x \rightarrow y) \odot (y \rightarrow x).
  \end{align*}
\end{prop}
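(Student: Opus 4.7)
The proposition is really a bundle of eight identities, and the first observation is that several of them are not substantive derivations at all but rather definitions introducing new notation: $x^2$ is defined as $x \odot x$, $\ominus$ is defined via $x \ominus y = x \odot \neg y$, $2x$ is defined as $x \oplus x$, and the biconditional $\leftrightarrow$ is defined as the $\odot$-meet of the two implications. So the plan is to peel these off at the start, stating explicitly that they serve as the definitions of $x^2$, $\ominus$, $2x$, and $\leftrightarrow$, and then focus real work on the remaining four identities.

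Next I would handle the one-line consequences of the truth-degree definitions. For $0 = \neg 1$ just apply $\neg u = 1-u$ at $u = 1$. For $x \rightarrow y = \neg x \oplus y$, I would simply refer back to the computation at the end of the previous proposition, which already established this from the truth-degree functions. These two should take essentially no space.

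The two remaining identities, $x \oplus y = \neg(\neg x \odot \neg y)$ and $|x-y| = (x \ominus y) \oplus (y \ominus x)$, are the ones that need actual verification. For the De Morgan identity, the cleanest route is to recall that $\odot$ is the strong conjunction with truth function $u \, \& \, v = \max\{0, u+v-1\}$, so $\neg x \odot \neg y = \max\{0,(1-x)+(1-y)-1\} = \max\{0, 1-x-y\}$, and then $\neg$ of that is $1 - \max\{0, 1-x-y\} = \min\{1, x+y\} = x \oplus y$. The absolute value identity requires a brief case split: unfold the right-hand side as $\min\{1,\, \max\{0,x-y\} + \max\{0,y-x\}\}$ and observe that in either case $x \ge y$ or $y \ge x$ one of the two summands vanishes while the other equals $|x-y|$, which lies in $[0,1]$, so the outer $\min$ with $1$ is inactive.

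The only mild obstacle is a presentational one rather than a mathematical one: the proposition does not explicitly announce which items are definitions versus theorems, so I would want to be careful in the opening sentence to flag this so the reader is not confused when $x^2 = x \odot x$ is dispatched in one line. Beyond that, every step reduces to elementary arithmetic on $[0,1]$ using the truth-degree functions already given, so no deeper machinery is required.
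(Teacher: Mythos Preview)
Your plan is sound and the verifications are all correct. Note, however, that the paper does not actually supply a proof of this proposition: it is quoted from \cite{paraconsistency} and included ``for completeness,'' with the remark that the equations are direct derivations or definitions. So there is no paper proof to compare against; your elementary truth-degree computations simply fill in what the paper leaves to the cited source.
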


\subsection{The Measurement of Pauli Gates and MV Algebra}
On the surface, the axioms of the cubic algebra look unrelated to the
axioms of the $MV_3$ algebra,
but in fact, $MV_3$ algebras completely
describe finite cubic algebras. In
\cite{turnanskypauli}, they show
how to embed the cubic lattice as a set
of quantum gates. We will use these results to lift a more standard logic,
the three valued Lukasiewicz logic, to a realization as quantum gates. We
hope that this will lay the foundation to a more general
theory of quantum computation.

For ease of the reader, we include the definition of a cubic lattice.

\begin{defin}[\cite{oliveira}]\label{cubic lattice definition}
  A cubic lattice, $C$, is a lattice with 0 and 1 satisfies the
  following axioms:
  \begin{enumerate}
    \item For $x \in L$, there is an order-preserving map $\Delta_x:
      (x) \r (x)$, $(x)$ denotes the principal ideal generated by $(x)$.
    \item If $0 < a,b <  x,$ then $a \join \Delta_x(b) < x$ if and
      only if $a \meet b = 0$.
    \item $L$ is complete.
    \item $L$ is atomistic.
    \item $L$ is coatomistic.
  \end{enumerate}
\end{defin}

As a cubic lattice is a lattice, it has two binary operations,
$\meet$, $\join$, and by the above definition, a third binary
operation $\Delta$. In order to relate the cubic lattice to a $MV_3$
algebra, we have the following operations.

\begin{defin}[\cite{paraconsistency}]
  For all $x,y \in M$,
  \begin{align*}
    x \join y &= (y \odot (x \leftrightarrow y) ^2 ) \oplus (\epsilon
    \odot (\epsilon \leftrightarrow | x- y|)^2 ).
  \end{align*}
  For all $x,y \in M$ such that $|x - y | \rightarrow \epsilon = 1$,
  \begin{align*}
    x \meet y &= (y \odot (x \leftrightarrow y ) ^2) \oplus (x \odot
    (x \leftrightarrow 2x)^2).
  \end{align*}
  For all $x,y \in M$ such that $y \join x = x$,
  \begin{align*}
    \Delta(x,y) &= (y \odot (x \leftrightarrow y)^2) \oplus (\neg y
    \odot (x \leftrightarrow \epsilon)^2).
  \end{align*}
\end{defin}

\begin{theorem}[Theorem 7 \cite{logic_many_valued}] \label{MV algebra as cubic}
  Let $M$ be a finite $MV_3$ algebra with an element $\epsilon = \neg
  \epsilon$. Then $\epsilon$ is unique, and the structure $(M, \join,
  \meet, \Delta)$ is an $n$-cubic algebra, where $n$ is the number of
  homomorphisms of $M$ into $L_3$. Further, all finite cubic algebras
  arise from this construction, and non-isomorphic finite $MV_3$
  algebras with self-negated elements determine non-isomorphic finite
  cubic algebras.
\end{theorem}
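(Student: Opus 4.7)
The natural plan is to invoke the subdirect representation theorem for finite $MV$-algebras, which embeds any finite $M$ into a finite power $L_3^n$ of the Lukasiewicz chain, and under this embedding the coordinate projections are in bijection with the homomorphisms $M \r L_3$, giving the desired $n$. Uniqueness of $\epsilon$ is then immediate: in $L_3$ only $\tfrac{1}{2}$ satisfies $x = \neg x$, so in $L_3^n$ only $(\tfrac{1}{2}, \ldots, \tfrac{1}{2})$ is self-negated, and since $M$ sits inside $L_3^n$ as a subalgebra this is the only candidate for $\epsilon$ available.

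To verify the cubic axioms I would argue coordinatewise. The expressions defining $\join$, $\meet$ and $\Delta$ are term functions in the $MV$-language, hence preserved by the coordinate projections, so it suffices to check that in a single copy of $L_3$ they produce the expected lattice and reflection operations on a $1$-cube and that the resulting product structure on $L_3^n$ coincides with the standard $n$-cubic algebra of subcubes. Finiteness yields completeness for free, while the atom and coatom conditions reduce to identifying vertex-like and facet-like elements inside $L_3^n$. I expect the main technical obstacle to be axiom~(2) of Definition~\ref{cubic lattice definition}, since the condition $a \join \Delta_x(b) < x \iff a \meet b = 0$ entangles all three operations and requires both a careful truth-table verification in $L_3$ and a coordinatewise lifting argument showing that strict inequality persists iff each coordinate witnesses disjointness.

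For the converse assertion, that every finite cubic algebra arises from this construction, I would rely on the coordinatization of finite cubic algebras as face-lattices of combinatorial cubes; such face-lattices are realized canonically inside $L_3^n$ together with the distinguished element $(\tfrac{1}{2}, \ldots, \tfrac{1}{2})$, and the $MV$-operations can be recovered from the cubic operations by solving the defining identities for $\oplus$ and $\neg$ using $\epsilon$ as a reference point. The same reconstruction proves the injectivity clause: any isomorphism of cubic algebras that respects the self-negated element must also respect the reconstructed $\oplus$ and $\neg$, so non-isomorphic $MV_3$-algebras with self-negated elements cannot yield isomorphic cubic algebras.
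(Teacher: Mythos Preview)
The paper does not supply its own proof of this theorem: it is quoted as Theorem~7 of the cited reference and used thereafter as a black box, so there is no in-paper argument against which to compare your sketch.

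Your outline is a reasonable route and is in the spirit of the standard structure-theoretic proof. One point deserves sharpening. You write that $M$ ``sits inside $L_3^n$ as a subalgebra,'' but the conclusion that $(M,\join,\meet,\Delta)$ is an $n$-cubic algebra really wants $M \cong L_3^n$, not merely an embedding. The subdirect representation by itself only gives the embedding; to upgrade it you need the structure theorem for finite $MV$-algebras (every finite $MV$-algebra is a finite direct product of finite \L ukasiewicz chains), together with the observation---which you do make implicitly---that the presence of a self-negated element $\epsilon$ forces every chain factor to be $L_3$ rather than $L_2$, since $L_2$ has no fixed point of $\neg$. Once $M \cong L_3^n$ is in hand, your coordinatewise verification of the cubic axioms and your reconstruction argument for the converse and the injectivity clause go through essentially as you describe.
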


\begin{exmp}
  The mapping, $\phi$ from an $n$ cubic lattice to $MV_3$ algebra is
  as follows. For all $(A^+, A^-) \in CL$, $\phi((A^+, A^-))_i
  \mapsto 1$ if $i \in A^+$, $-1$ if $i \in A^-$, or $\frac{1}{2}$
  otherwise. We see that $\frac{1}{2}$ takes the place of our
  indeterminant element $X$ in our standard representation of the
  $MV_3$ algebra.
  Then $\epsilon$ is equal to the face representing the whole cube.
\end{exmp}
By our previous work in \cite{turnanskypauli}, we have an embedding
of the cubic lattice as quantum gates. Specifically, these
projections are onto subspaces of the Pauli gates, and by unitary
similarity, we can achieve any such eigenspace. These unitary
transformations are a special subset, namely tensor products of the
Pauli gates themselves.

\begin{theorem}\label{Lukasiewicz embedding}
  Let $H$ be a Hilbert space constructed as a tensor product of $2$
  dimensional spaces over a finite index set $I$. For the given
  Hilbert lattice $HL$ of $H$, there exists $MV_3$ algebra, $A$ such
  that $A \subseteq HL$, and the atoms of $CL$ are projections onto
  subspaces $H$ forming an orthonormal basis of $H$.

  Furthermore, the semantics of $MV_3$ algebra agree with the three
  valued Lukasiewicz logic.
\end{theorem}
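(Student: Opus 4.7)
The plan is to compose two correspondences already in hand: the embedding of a finite cubic lattice as Pauli-gate projections from \cite{turnanskypauli}, and the identification of finite cubic algebras with finite $MV_3$ algebras possessing a self-negated element given by Theorem \ref{MV algebra as cubic}. Together these produce an $MV_3$ algebra sitting inside $HL$ on the same underlying point set as the cubic embedding.

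First I would apply \cite{turnanskypauli} to the tensor-product Hilbert space $H$ to realize a finite cubic lattice $CL \subseteq HL$ whose atoms are the rank-one projections onto common eigenvectors of a chosen commuting family of Pauli tensor products. Because those eigenvectors are mutually orthogonal and span $H$, the atoms of $CL$ are exactly projections onto an orthonormal basis of $H$, which settles the geometric portion of the conclusion. Next I would invoke the reverse direction of Theorem \ref{MV algebra as cubic} (``all finite cubic algebras arise from this construction'') to furnish a finite $MV_3$ algebra $M$ with self-negated element $\epsilon$ whose induced $(\join, \meet, \Delta)$ structure reproduces $CL$. Taking $A := M$ and identifying its underlying set with the elements of $CL$ gives $A \subseteq HL$, while the $MV_3$ operations $\oplus$, $\neg$ are obtained by inverting the cubic formulas given just before Theorem \ref{MV algebra as cubic}.

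For the semantic claim I would appeal to the earlier proposition that the $MV_3$ axioms together with $x \oplus x \oplus x = x \oplus x$ define the three-valued Lukasiewicz logic. Since Theorem \ref{MV algebra as cubic} parametrizes $M$ via its $n$ homomorphisms into $L_3$, every element of $M$ admits a coordinatewise representation in $\{0, \tfrac{1}{2}, 1\}^n$, as illustrated in the preceding example, and the identity $x \oplus x \oplus x = x \oplus x$ holds pointwise on $\{0, \tfrac{1}{2}, 1\}$, so it lifts to $M$.

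The principal obstacle I foresee is pinning down the compatibility between the three parallel descriptions of the same set---projections in $HL$, elements of $CL$, and elements of $M$---so that the $\oplus$ and $\neg$ obtained abstractly through the bijection of Theorem \ref{MV algebra as cubic} genuinely match the operations one would write down by direct manipulation of Pauli eigenspaces. Once this dictionary is verified, both the containment $A \subseteq HL$ and the agreement of the resulting semantics with $L_3$ follow with no further computation.
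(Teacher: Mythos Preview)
Your proposal is correct and follows essentially the same route as the paper: invoke the cubic-lattice embedding $CL \subseteq HL$ from \cite{turnanskypauli}, then apply Theorem~\ref{MV algebra as cubic} to endow $CL$ with the $MV_3$ structure and Lukasiewicz semantics. The paper's own proof is a two-sentence version of exactly this composition, so your additional remarks about the coordinatewise verification of $x \oplus x \oplus x = x \oplus x$ and the dictionary between the three descriptions are elaborations rather than a different strategy.
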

\begin{proof}
  By Theorem 2.1.11  in \cite{turnanskypauli}, we have the claim that
  $CL \subseteq HL$. Now by \ref{MV algebra as cubic}, we have that
  $CL$ is a $MV_3$ algebra satisfying the semantics of the three
  valued Lukasiewicz logic.
\end{proof}

We have now shown that there is an embedding of the standard
representation of the $MV_3$ algebra as a set of quantum
operations. Specifically, we can represent them within the standard
set of Pauli gates. As the standard representation of the $MV_3$
algebra satisfies the Lukasiewicz three valued logic, we have an
embedding of the logic as well. Therefore, any semantics that have
been derived for these well studied algebras are representable as a
set of quantum gates.

\section{Applications of Cubic Lattices to Stabilizer Codes}
We have shown how the three valued Lukasiewicz logic can be
represented as standard set of quantum gates. A large problem in
quantum computing is error correction. We give an exposition of the
standard family of quantum error correction codes, namely quantum
stabilizer codes, and we then show that our embedding of the
three valued Lukasiewicz logic is well behaved in this system.

\subsection{Lukasiewicz logics and Pauli Groups}
The core principle of stabilizer codes is that our computational
state of interest lies in the common eigenspace of a set of
operators. For quantum stabilizer codes, the following operators serve
as  building blocks.
\begin{defin}
  The Pauli spin matrices, $\mathscr{S}$, are hermitian symmetries in
  $M_2(\Co)$ denoted by
  \[
    X =
    \begin{bmatrix}
      0 & 1 \\
      1 & 0
    \end{bmatrix}, \,
    Y =
    \begin{bmatrix}
      0 & -i \\
      i & 0
    \end{bmatrix}, \,
    Z =
    \begin{bmatrix}
      1 & 0 \\
      0 & -1
  \end{bmatrix}.\]
\end{defin}

In \cite{turnansky_thesis}, we constructed a Hilbert lattice and
showed that the $\De$ of Definition \ref{cubic lattice definition}
can be satisfied by an $|I|$ tensor product of $X$, and then by
unitary similarity, an $|I|$ tensor product of Pauli matrices, where
$I$ is an arbitrarily large, not necessarily finite, indexing set. We
should also highlight that we used a non-standard infinite tensor
product specified in \cite[Section 3]{turnansky_thesis}. We now adapt
our notation to the standard terminology of the quantum computing community.

When $I$ is a finite indexing set, the tensor product of Pauli
matrices constitutes a group of fundamental importance to quantum
stabilizer codes, the Pauli Group.

\begin{defin}
  The Pauli group on 1 qubit is the multiplicative group $G_1 = \{\pm
  I_2, \pm X, \pm Y, \pm Z\}$.
\end{defin}

\begin{defin}
  For some future applications, we often want to exclude the negative
  elements of Pauli spin matrices and the identity, so we define
  $G_1^+ = \{I_2, X, Y, Z\}$.
\end{defin}

\begin{defin}
  The Pauli group on $n \in \N$ qubits, $G_n$, is the group
  consisting of $n$-fold tensor products of elements of $G_1$.
  Similarly, we define $G_n^+$ as the group consisting of $n$-fold
  tensor products of elements of $G_1^+$.
\end{defin}

\begin{defin}
  We define the Pauli group on $|I|$ qubits, $G_{|I|}$ to be the
  multiplicative group of $|I|$ tensor products of elements in $G_1$.
  If the number of qubits is arbitrary, we will simply say the Pauli group.
\end{defin}

Here multiplication is computed in the typical way on each index $i
\in I$. We are now in a position to consider quantum stabilizer
codes. For the following definitions, the Pauli group acts on
$\Co^{2^{|I|}}$ by left multiplication. As the Pauli group is made up
of self adjoint operators, we can also view this action as the
observables of states.

\begin{defin}
  Let $\rho: G_n \r M_{2^n}(\Co)$ denote the unitary
  representation of the Pauli group.
\end{defin}

We now relate the quantum computing concepts to Lukasiewicz algebras.
Generally, these results follow from our prior work with cubic
lattices and the results of section 1.
\begin{defin}
  Let $G$ be a set of symmetries, then $Proj(G) = \{\frac{I \pm g}{2}
  | g \in G\} $  denotes the canonical projections of $G$.
\end{defin}

\begin{lemma}\label{projections form cubic lattice}
  Let $S$ be a stabilizing set generated by $\{s_i\}_{i = 1 }^n$,
  $s_i \in G_n$, $i = 1, \dots, n$. The meet semi lattice generated
  by the the canonical projections of the generators of $S$ satisfy
  the three valued Lukasiewicz logic as in Theorem \ref{Lukasiewicz embedding}.
\end{lemma}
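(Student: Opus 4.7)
The plan is to reduce the claim to Theorem \ref{Lukasiewicz embedding} by exhibiting the meet semi-lattice generated by $\{(I\pm s_i)/2\}$ as a sub-cubic-lattice of the Hilbert lattice $HL$ of $\Co^{2^n}$, and then invoking the fact that a cubic lattice gives an $MV_3$ algebra with Lukasiewicz semantics.

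First I would record the structural facts about stabilizer generators. Each $s_i$ is (up to a global sign) an $n$-fold tensor product of Pauli matrices, hence a Hermitian symmetry with spectrum $\{+1,-1\}$; therefore $P_i^\pm := (I\pm s_i)/2 \in \mathrm{Proj}(\{s_i\})$ are orthogonal projections onto the $\pm 1$ eigenspaces of $s_i$. Since $S$ is a stabilizing set, the generators commute pairwise, so the $\{P_i^\pm\}$ generate a commutative $*$-subalgebra; in particular the lattice meet of commuting projections is the operator product, and every element of the meet semi-lattice is again a projection in $HL$.

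Second, I would identify this meet semi-lattice with the cubic lattice of $\{+1,-1,*\}^n$. Choosing a joint eigenbasis for $\{s_i\}$, each $P_i^+$ (resp.\ $P_i^-$) corresponds to fixing the $i$-th coordinate to $+1$ (resp.\ $-1$), and a meet of such projections corresponds to fixing a set of coordinates and leaving the rest free. This is exactly the standard cubic lattice of faces of the $n$-cube. The $\Delta$ operation of Definition \ref{cubic lattice definition} is realized by conjugation with appropriate tensor products of Pauli matrices, which is precisely the embedding of $CL$ into $HL$ supplied by Theorem 2.1.11 of \cite{turnanskypauli} and cited in the proof of Theorem \ref{Lukasiewicz embedding}.

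Finally, once the meet semi-lattice is identified as a cubic sublattice of $HL$, I would invoke Theorem \ref{MV algebra as cubic} to obtain an $MV_3$ algebra structure with a self-negated $\epsilon$, and then Theorem \ref{Lukasiewicz embedding} to conclude that its semantics agree with the three valued Lukasiewicz logic.

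The main obstacle I anticipate is the second step: verifying that taking only meets (rather than meets and joins, and the $\Delta$ operator) is enough to generate the full cubic structure, and that the ambient $HL$ operations coincide with the intrinsic cubic operations on this sublattice. Commutativity of the stabilizer, together with the joint eigenbasis decomposition, is what makes the two sets of operations agree; once this is spelled out, the rest is a direct citation of prior results.
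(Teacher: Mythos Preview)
Your proposal is correct and follows essentially the same route as the paper: identify the canonical projections $P_i^\pm$ with the co-atoms of an $n$-cubic lattice (the paper does this via the signed-set map $p\mapsto(i,\emptyset)$ or $(\emptyset,i)$, you via the joint eigenbasis picture of faces of $\{+1,-1,*\}^n$), invoke the prior work \cite{turnanskypauli} to see this is a meet semi-lattice isomorphism, and then apply Theorem~\ref{MV algebra as cubic} and Theorem~\ref{Lukasiewicz embedding}. The only cosmetic differences are that the paper cites Theorem~2.1.17 rather than 2.1.11 of \cite{turnanskypauli} and checks bijectivity by the cardinality count $|Proj(\{s_i\})|=2n=|C|$, whereas you argue it structurally; the obstacle you flag about meets alone versus the full cubic structure is handled in both cases by the same outsourcing to \cite{turnanskypauli}.
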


\begin{proof}
  We begin by constructing a bijection between two sets:
  the co-atoms of the meet semi lattice of projections
  $Proj(\{s_i\}_{i=1}^n)$, and a set C of co-atoms of an $I$ cubic
  lattice, represented as a signed set.  We define a map $f:
  Proj(\{s_i\}_{i=1}^n) \r  C$ by $p \mapsto (i , \emptyset)$ if $p =
  \frac{1 + s_i}{2}$ or $p \mapsto (\emptyset, i)$ if $p = \frac{1 -
  s_i}{2}$. By construction, $f$ is injective. Since
  $|Proj(\{s_i\}_{i=1}^n)| = 2n = |C|$, $f$ is onto as well.

  Furthermore, $f$ is a meet semi lattice homomorphism by
  \cite{turnanskypauli}[Theorem 2.1.17], so we have
  that their sets of atoms are equal as well. Lastly, by Theorem
  \ref{MV algebra as cubic}, we conclude our result.
\end{proof}

We highlight that the above order isomorphism is only a meet
homomorphism and not a join homomorphism. This is a not a
contradiction of standard results as we are only claiming an order
isomorphism on a subset of the lattice generated by $S$.

Our notation for $S$ was purposeful. The Pauli Group is a fundamental
building block of quantum stabilizer codes.

\begin{defin}
  We say that a subspace $V_S \subseteq \Co^{2^{|I|}}$ is stabilized
  by $S \le G_{|I|}$ if for all $v \in V$ and all $s \in S$, $sv = v$.
\end{defin}

\begin{defin}
  An $[n,k]$ stabilizer code is defined to be the vector space $V_S$
  stabilized by the maximal subgroup $S$ of $G_n$ such that $-I \notin S$ and
  has $n - k$ independent and commuting generators $S = \{g_1, \dots,
  g_{n-k}\}$, $g_i \in G_{|I|}$, $i = 1, \dots n- k$. We denote this
  code by $C(S)$. \cite{nielsen}
\end{defin}

\begin{defin}
  Let $S$ be a stabilizing set generated by $\{s_i\}_{i = 1 }^k$,
  $s_i \in G_n$, $i = 1, \dots, k \le n$. We define a completion of
  $S$, to be a set a set of distinct generators $T = s_i$, $i = k+1, \dots n$,
  such that $\overline{S} = S \cup T$ is an abelian group.
\end{defin}

\begin{prop}\label{completion of S}
  Let $S$ be a stabilizing set generated by $\{s_i\}_{i = 1 }^k$,
  $s_i \in G_n$, $i = 1, \dots, k \le n$, then there exists a
  completion of $S$, $T$, and $T$ is a minimal set of generators
  projecting onto a basis $\mathbb{C}^{2^n}$.
\end{prop}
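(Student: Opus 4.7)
The plan is to exploit the symplectic structure on the Pauli group modulo its center. I would identify $G_n / \{\pm I, \pm iI\}$ with $\F_2^{2n}$ via the standard encoding sending a tensor of factors $X^{a_i}Z^{b_i}$ to $(a,b)$, under which the group commutator becomes the symplectic form $\omega((a,b),(a',b')) = a \cdot b' + b \cdot a' \pmod 2$; two Pauli operators commute precisely when they are $\omega$-orthogonal. The given commuting independent generators $s_1, \dots, s_k$ then project to $k$ linearly independent vectors spanning an isotropic subspace $V_0 \sub \F_2^{2n}$ of dimension $k$.

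Next I would extend $V_0$ to a Lagrangian (maximal isotropic) subspace $V$ of dimension $n$. The standard construction suffices: while $\dim V_0 < n$, the dimension formula forces $V_0 \subsetneq V_0^{\perp_\omega}$, so one may pick any $v \in V_0^{\perp_\omega} \setminus V_0$, adjoin it, and iterate. After $n-k$ rounds we obtain a Lagrangian basis, and lifting the new basis vectors to operators $s_{k+1}, \dots, s_n \in G_n^+$ produces generators that pairwise commute and commute with each $s_i$ for $i \le k$. Setting $T = \{s_{k+1}, \dots, s_n\}$ makes $\overline{S} = S \cup T$ abelian, and independence of the lifts in $\F_2^{2n}$ forces every nontrivial product to lie off the center, so $-I \notin \overline{S}$; thus $T$ is a valid completion.

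To see that $\overline{S}$ projects onto an orthonormal basis of $\Co^{2^n}$, I would observe that $\{s_1, \dots, s_n\}$ is a commuting family of self-adjoint involutions, hence simultaneously diagonalizable. For each sign vector $\epsilon \in \{\pm 1\}^n$, the product $\prod_{i=1}^n \tfrac{I + \epsilon_i s_i}{2}$ is a projection onto the joint $\epsilon$-eigenspace; these $2^n$ projections are pairwise orthogonal, sum to $I$, and each is nonzero because $-I \notin \overline{S}$. A dimension count on $\Co^{2^n}$ then forces every joint eigenspace to be one-dimensional, so the associated unit vectors form an orthonormal basis. Minimality of $|T| = n - k$ is immediate, since any commuting family whose joint eigenspaces are one-dimensional must span an $n$-dimensional subspace of $\F_2^{2n}$ modulo the center, and so must contain at least $n$ independent generators in total.

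The main obstacle will be verifying that the lifts $s_{k+1}, \dots, s_n$ square to $+I$ and commute on the nose, rather than merely up to a central phase, so that the operators $\tfrac{I \pm s_i}{2}$ are genuine spectral projections and the nonexistence of $-I$ in $\overline{S}$ is preserved under all products. Restricting preimages to $G_n^+$ handles the squaring automatically since each of $I, X, Y, Z$ is an involution, and exact commutation then reduces directly to the symplectic orthogonality already enforced at the level of $\F_2^{2n}$.
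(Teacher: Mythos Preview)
Your approach is correct and matches the paper's: both reduce the problem to the standard $\F_2^{2n}$ representation of $G_n$ modulo its center, with commutation encoded by the symplectic form. The paper's proof is a single sentence invoking ``completing a basis of vectors for the vector space $\Z_2^{2n}$ given $k$ independent vectors,'' whereas you have spelled out the Lagrangian extension, the lifting to $G_n^+$, the check that $-I \notin \overline{S}$, and the dimension count on the joint eigenspaces---all of which the paper leaves implicit.
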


\begin{proof}
  By the standard representation for a stabilizer code, this is equivalent
  to completing a basis of vectors for the vector space $\Z_2^{2n}$ given $k$
  independent vectors.
\end{proof}

We now have a main result: quantum stabilizer codes and three
valued Lukasiewicz logics as different incarnations of the same structure.

\begin{lemma}\label{cubic lattice in stabilizer code}
  For any stabilizing set $S$ generated by $\{s_i\}_{i = 1 }^k$, $s_i
  \in G_n$, $i = 1, \dots, k \le n$, the canonical projections can be
  realized as a 3 valued Lukasiewicz logic.
\end{lemma}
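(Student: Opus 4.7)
The plan is to reduce this statement to Lemma \ref{projections form cubic lattice} by first completing $S$ to a full stabilizing set. Lemma \ref{projections form cubic lattice} is stated for a generating set of size exactly $n$, which is precisely when the canonical projections are in bijection with the $2n$ co-atoms of an $n$-cubic lattice. Here we have only $k \le n$ generators, so the projections on their own produce $2k$ co-atoms, which in general do not assemble into a cubic lattice on their own within the ambient Hilbert lattice of $\Co^{2^n}$; this is the gap Proposition \ref{completion of S} is designed to bridge.

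The steps I would carry out, in order, are as follows. First, invoke Proposition \ref{completion of S} to extend $\{s_i\}_{i=1}^k$ to a complete set $\{s_i\}_{i=1}^n$ of independent commuting generators with $-I$ not in the generated group. Second, apply Lemma \ref{projections form cubic lattice} to this completion: the meet semi-lattice generated by $\mathrm{Proj}(\{s_i\}_{i=1}^n)$ is order isomorphic to the meet semi-lattice of an $n$-cubic lattice, which by Theorem \ref{MV algebra as cubic} carries the structure of an $MV_3$ algebra and by Theorem \ref{Lukasiewicz embedding} realizes the three valued Lukasiewicz logic inside the Hilbert lattice. Third, observe that $\mathrm{Proj}(\{s_i\}_{i=1}^k) \subseteq \mathrm{Proj}(\{s_i\}_{i=1}^n)$ by construction, so the canonical projections of the original $k$ generators embed into, hence are realized within, this Lukasiewicz structure via the restriction of the bijection $f$ from Lemma \ref{projections form cubic lattice}; the restriction is well-defined because $f$ is defined generator-by-generator.

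I expect the main obstacle to be conceptual rather than computational: clarifying the precise sense of \emph{realized}. The completion $T$ produced by Proposition \ref{completion of S} is not unique, so the ambient Lukasiewicz logic in which the original projections sit depends on this choice. This is not a genuine obstruction because the lemma is an existence statement, but it should be flagged so that the definite article in ``a 3 valued Lukasiewicz logic'' is understood existentially. Aside from this point, no new calculation is required beyond what was already done for Lemma \ref{projections form cubic lattice}, so the proof reduces to a short invocation of completion followed by citation.
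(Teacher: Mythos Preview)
Your approach is essentially identical to the paper's: invoke Proposition \ref{completion of S} to obtain a completion of $S$, then apply Lemma \ref{projections form cubic lattice} to the completed generating set. The paper's proof is the two-line version of exactly this argument, without your additional remarks on the restriction map and non-uniqueness of the completion.
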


\begin{proof}
  Let $T$ be the completion of $S$, which exists by Proposition \ref{completion
  of S}. Then by \ref{projections form cubic lattice}, we have a meet
  semi lattice isomorphism to a three valued Lukasiewicz logic.
\end{proof}

\begin{theorem}
  Let  $\mathscr{S}$ be the set of stabilizer codes on $n$ qubits and
  $\mathscr{L}$
  be the set of 3 valued Lukasiewicz algebras on $n$ states.
  There exists a well defined surjective order homomorphism $\Pi:
  \mathscr{S} \rightarrow \mathscr{L}$.
\end{theorem}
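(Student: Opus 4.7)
The plan is to construct $\Pi$ by sending each stabilizer code to the Lukasiewicz algebra generated by the canonical projections of a completion of its stabilizing set. Given $C(S) \in \mathscr{S}$ with $S$ generated by $\{s_i\}_{i=1}^{n-k}$, I would first choose a completion $T$ via Proposition \ref{completion of S}, so that $\overline{S}$ is generated by $n$ independent commuting elements $\{s_i\}_{i=1}^n \subseteq G_n$. I then define $\Pi(C(S))$ to be the 3-valued Lukasiewicz algebra obtained, via Lemma \ref{cubic lattice in stabilizer code}, from the meet semilattice of the canonical projections $Proj(\{s_i\}_{i=1}^n)$.

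First I would verify that this assignment is independent of the chosen completion, so $\Pi$ is well defined. Any two completions $T, T'$ of $S$ produce $n$-element sets of independent commuting Pauli generators, and by Lemma \ref{projections form cubic lattice} together with Theorem \ref{MV algebra as cubic}, each yields the finite $MV_3$ algebra with exactly $n$ homomorphisms into $L_3$. Since that theorem asserts that finite $MV_3$ algebras with a self-negated element are determined up to isomorphism by this datum, the two algebras coincide as elements of $\mathscr{L}$. For surjectivity, I would start with any $L \in \mathscr{L}$, apply Theorem \ref{MV algebra as cubic} to obtain an $n$-cubic algebra, and then use Theorem \ref{Lukasiewicz embedding} to realize its coatoms as canonical projections of $n$ independent commuting operators in $G_n$; selecting any $n-k$ of them as generators produces a stabilizing set $S$ with $\Pi(C(S)) = L$.

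Finally, I would confirm that $\Pi$ is an order homomorphism. Ordering $\mathscr{S}$ by extension of stabilizing subgroups (so that $C(S_1) \le C(S_2)$ when $S_1 \subseteq S_2$, equivalently $V_{S_2} \subseteq V_{S_1}$) and $\mathscr{L}$ by subalgebra inclusion, any extension of generating sets $\{s_i\}_{i=1}^{k_1} \subseteq \{s_i\}_{i=1}^{k_2}$ admits compatible completions, so the projection semilattice, and hence the induced Lukasiewicz algebra, for $S_1$ embeds into that for $S_2$. The main obstacle I anticipate is the well-definedness step: showing that two distinct completions really yield isomorphic, not merely structurally comparable, algebras. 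This will rely on the uniqueness clause in Theorem \ref{MV algebra as cubic} together with the rigidity guaranteed by Proposition \ref{completion of S} — namely, that the number of independent commuting generators in any completion of $S$ is always exactly $n$.
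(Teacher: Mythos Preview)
Your proposal follows essentially the same route as the paper: define $\Pi$ via the cubic lattice / $MV_3$ correspondence of Lemma~\ref{cubic lattice in stabilizer code} and Theorem~\ref{MV algebra as cubic}, check well-definedness through the cardinality/uniqueness clause for finite $MV_3$ algebras, and obtain surjectivity by matching co-atom counts. You are in fact more explicit than the paper's own argument, particularly in handling the choice of completion and in specifying the orders on $\mathscr{S}$ and $\mathscr{L}$ needed for the order-homomorphism claim, which the paper leaves implicit.
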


\begin{proof}
  For each stabilizer set $S \in \mathscr{S}$ the meet semi
  lattice generated by $Proj(S)$ forms a cubic lattice and therefore a
  three valued Lukasiewicz logic, $L$, so we define the map $\Pi:
  \mathscr{S} \r \mathscr{L}$ where $S \mapsto L$ by the construction in
  Theorem \ref{Lukasiewicz embedding}. For a given $S \in \mathscr{S}$,
  suppose $L_1, L_2 \in \mathscr{L}$ and $L_1$, $L_2$, both satisfy the
  semantics of $S$. Then $|L_1| = |L_2|$, as we have a fixed domain for
  finite valued Lukasiewicz logics, so $L_1 = L_2$. Lastly, suppose
  without loss of generality that $L \in \mathscr{L}$ has a set
  of co-atoms of cardinality $k$. There exists $S \in \mathscr{S}$
  also with a set of co-atoms of cardinality $k$, so that $\Pi(S) = L$.
\end{proof}




We now see that a cubic lattice and its associated three valued
Lukasiewicz logic make a strong candidate for a logic that can
semantically represent observables of a quantum stabilizer code. Importantly,
any stabilizer set
corresponds to a well defined Lukasiewicz logic, and, conversely, any finite
Lukasiewicz logic can be represented as quantum system generated by the
Pauli group. A primary application of the Pauli group in quantum
computing is their usage in error correction codes. In the
following section, we give an overview of this relationship, and then we
adapt our prior work with cubic lattices to this space while
demonstrating some new results.

\subsection{Automorphisms of the Cubic Lattice and Errors of Quantum
Stabilizer Codes}
Two of the primary focuses of \cite{turnansky_thesis} where embedding
the cubic lattice in a Hilbert space and characterizing the
automorphisms of the operation $U_\De$. As we have seen the cubic
lattice can be viewed as a semantic representation of the building
blocks of quantum stabilizer codes. An error correction code is only
as useful as the errors it prevents, this section will show that our
characterization of the automorphisms will also characterize these errors.

\begin{defin}
  Given an $[n,k]$ stabilizer code, $C(S)$, $S = \lan g_1,\dots, g_{n -
  k} \ran$. We define $N_{G_{|I|}}(S)$ as the normalizer of $S$ in
  the Pauli group, and $C_{G_{|I|}}(S)$ as the centralizer of $S$ in
  the Pauli group.
\end{defin}

Stated simply the goal of quantum error correction code code is the
following. Given some basis of computation, in our case, a subspace
$V \subseteq \Co^{2^{|I|}}$, we want to prevent unanticipated unitary
actions on $V$.

The spirit of stabilizer codes is start with an abelian subgroup, $H$,
generated by elements in $G_{|I|}$ and define $V = Stab(H)$.
\begin{prop}\label{error types}
  As outlined in \cite[Subsection 10.5.5]{nielsen}, we can then
  separate errors $E \in G_{|I|}$, into three categories. For a given $E$,
  \begin{enumerate}
    \item $E \notin C_{G_{|I|}}(S)$
    \item $E \in S$
    \item $E \in C_{G_{|I|}}(S) - S$
  \end{enumerate}
\end{prop}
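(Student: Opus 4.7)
The plan is to verify that the three listed conditions partition the Pauli group $G_{|I|}$, so that every error $E$ falls into exactly one class. The argument is elementary once one identifies the single structural ingredient that makes the decomposition work.

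First I would split $G_{|I|}$ along the dichotomy $E \in C_{G_{|I|}}(S)$ versus $E \notin C_{G_{|I|}}(S)$. The latter is exactly category 1, so it remains to subdivide $C_{G_{|I|}}(S)$ into the disjoint pieces $S$ and $C_{G_{|I|}}(S) \setminus S$, which are categories 2 and 3 respectively. These two pieces are disjoint and their union is $C_{G_{|I|}}(S)$ by construction, provided one knows that $S$ is in fact a subset of its own centralizer.

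The only step requiring justification is therefore the inclusion $S \subseteq C_{G_{|I|}}(S)$. This follows directly from the definition of an $[n,k]$ stabilizer code recalled above, which requires $S$ to be generated by $n-k$ commuting elements. Consequently $S$ is abelian, and any abelian subgroup of a group is contained in its own centralizer. With this inclusion the decomposition $C_{G_{|I|}}(S) = S \sqcup (C_{G_{|I|}}(S) - S)$ is immediate, and combined with the first dichotomy we obtain the desired trichotomy on $G_{|I|}$.

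The main obstacle is essentially nonexistent at the level of the partition claim; the substantive content of the proposition lies in the error-correction interpretation developed in \cite{nielsen}[Subsection 10.5.5], namely that class 1 errors anticommute with some stabilizer generator and are detectable via syndrome measurement, class 2 errors lie in the stabilizer and act as the identity on $V_S$, and class 3 errors commute with every stabilizer while still acting nontrivially on $V_S$, so they represent the undetectable logical errors. I would include this interpretation as a short remark after the formal partition argument, since it motivates why this particular trichotomy is the right organizing principle for the automorphism-based characterization of errors pursued in the remainder of the subsection.
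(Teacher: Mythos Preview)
Your argument is correct: the three classes partition $G_{|I|}$ because $S$ is abelian and hence $S \subseteq C_{G_{|I|}}(S)$, after which the decomposition is tautological. The paper itself does not supply a proof of this proposition at all---it simply records the trichotomy with a citation to \cite[Subsection 10.5.5]{nielsen} and moves on to the error-correction interpretation in the surrounding text---so your explicit verification, together with the interpretive remark on detectability of each class, is more than what the paper provides.
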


However, we only need to worry about the last category as the first
two are correctable.

\begin{theorem}\label{Error-correction condition for stabilizer codes}
  Let $S$ be the stabilizers of an error correction code $C(S)$.
  Suppose that $\{E_j\}$ is
  set of operators in $G_n$ such that $E_j^\ast E_k \notin
  N_{G_{|I|}}(S) - S$ for all $j$ and $k$. Then $\{E_j\}$ is a
  correctable set of errors for the code $C(S)$. \cite[Theorem 10.8]{nielsen}
\end{theorem}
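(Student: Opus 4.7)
The plan is to reduce the claim to the Knill--Laflamme error-correction conditions. Let $P$ denote the orthogonal projection onto the code space $V_S$. Knill--Laflamme asserts that $\{E_j\}$ is a correctable set of errors for $C(S)$ exactly when there exists a Hermitian matrix $(\alpha_{jk})$ with $P E_j^\ast E_k P = \alpha_{jk} P$ for all $j,k$. So the argument reduces to verifying this identity pair by pair, and it will turn out that $(\alpha_{jk})$ can be taken to be a symmetric $\{0,1\}$-matrix.

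Fix $j, k$ and set $F = E_j^\ast E_k \in G_n$. The hypothesis $F \notin N_{G_{|I|}}(S) - S$ splits into two cases. If $F \in S$, then $F$ stabilizes every vector of $V_S$, so $F P = P$ and $P F P = P$, giving $\alpha_{jk} = 1$. If instead $F \notin N_{G_{|I|}}(S)$, then some generator $g \in S$ fails to commute with $F$; by the elementary dichotomy that any two Pauli tensors commute or anticommute with sign $\pm 1$, this sharpens to $g F = -F g$. A routine computation using $gP = P = Pg$ (the first by stabilization, the second by self-adjointness of $g$ and $P$) then yields $P F P = -P F P$, forcing $P F P = 0$ and $\alpha_{jk} = 0$.

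For Hermiticity of $(\alpha_{jk})$, note that $E_j^\ast = E_j$ for Pauli tensors, so $(E_j E_k)^\ast = E_k E_j$. When $E_j E_k \in S$, self-adjointness of the stabilizers forces $E_j E_k = E_k E_j$, so the transposed pair also lies in $S$ and $\alpha_{kj} = 1$. When $E_j E_k \notin N_{G_{|I|}}(S)$, we have $E_k E_j = \pm E_j E_k$, and since $-I \in N_{G_{|I|}}(S)$, also $E_k E_j \notin N_{G_{|I|}}(S)$, so $\alpha_{kj} = 0$. Hence $(\alpha_{jk})$ is a symmetric $\{0,1\}$-matrix, automatically Hermitian, and Knill--Laflamme concludes.

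The step I expect to require the most care is the passage from ``$F$ does not normalize $S$'' to an \emph{exact} $-1$ anticommutation with some stabilizer generator. This rests on the standard fact that two tensor products of Pauli matrices satisfy $AB = \pm BA$, with the sign determined by the symplectic pairing over $\F_2^{2n}$; I would invoke this standard property of the Pauli group directly rather than rederive it. Everything else is bookkeeping.
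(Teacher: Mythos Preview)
Your argument is correct and is essentially the standard proof found in the cited reference \cite[Theorem~10.8]{nielsen}: reduce to the Knill--Laflamme conditions, then split on whether $E_j^\ast E_k$ lies in $S$ or outside $N_{G_{|I|}}(S)$, using the commute/anticommute dichotomy of Pauli tensors in the second case.

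There is nothing to compare against in the paper itself: the theorem is quoted from \cite{nielsen} without proof. The paragraph immediately following it offers only an informal intuition for why the three error categories of Proposition~\ref{error types} behave as they do, not a proof of the Knill--Laflamme verification you carry out. So your write-up is not merely consistent with the paper's approach---it supplies the argument the paper defers to its reference.

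Two small remarks on presentation. First, the passage from ``$F\notin N_{G_{|I|}}(S)$'' to ``$F$ anticommutes with some $g\in S$'' silently uses $N_{G_{|I|}}(S)=C_{G_{|I|}}(S)$, which holds because $-I\notin S$; it would be worth stating this explicitly. Second, in the Hermiticity paragraph you switch from $E_j^\ast E_k$ to $E_j E_k$; with the paper's definition of $G_1$ all elements are Hermitian so this is harmless, but keeping the stars would make the argument robust to the more common convention that includes the phases $\pm i$.
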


As a brief intuition, the first category can be detected by a
projection onto $C(S)^\perp$ being nonzero. As we have not gained any
information about the entangled state, we can correct this without
issue. The second category does not effect the computational basis we
wish to measure, so is of no concern. The final category is not
detectable by the stabilizers and potentially acts non-trivially on
the computational state. In order to detect this, we typically need
to have some additional information about the system. This comes
at the cost of more interaction of the system with the environment,
which potentially leads to even more error.

The focus of this section is the characterization of errors $E \in
C_{U_{2^{|I|}}}(S) - S$ which we will denote by $\mathscr{E}$.  We have a
couple of immediate questions about the nature of $\mathscr{E}$.
Here, $U_{2^{|I|}}$ is the unitary group of $B(\Co^{2^{|I|}})$. We frame
the results from our prior work in the more standard nomenclature of
the previous section to answer the following questions.
\begin{enumerate}
  \item Errors are represented as unitary operators acting on states,
    so do we need to be concerned with $C_{U_{2^{|I|}}}(S) - S$ instead
    of $C_{G_{|I|}}(S) - S$?
  \item $\mathscr{E} \cup Z(S)$ are the unitary operators contained
    in $S'$ so can we give a nice characterization of this group?
\end{enumerate}


\begin{defin}
  The hamming distance, $d: G_n^+ \r
  \mathbb{N}$, is defined as $d_h(g,h) = |\{i \le n \, | \, g_i \ne
  h_i\}|$ for any $g, h \in G_n^+$.
\end{defin}

The first question can be succinctly answered.

\begin{theorem}
  For any stabilizer code $C(S)$,  $W^\ast(C_{G^+_{|I|}}(S)) =
  W^\ast(C_{U_{2^{|I|}}}(S))$.
\end{theorem}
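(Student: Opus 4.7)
The plan is to reduce both sides of the asserted equality to the full commutant $S'$ inside $B(\Co^{2^{|I|}})$, from which the equality of the generated von Neumann algebras is automatic. The inclusion $W^\ast(C_{G^+_{|I|}}(S)) \subseteq W^\ast(C_{U_{2^{|I|}}}(S))$ is immediate since every element of $G^+_{|I|}$ is already a unitary in the Pauli centralizer, so the real work lies in the two upgrades $W^\ast(C_{U_{2^{|I|}}}(S)) = S'$ and $W^\ast(C_{G^+_{|I|}}(S)) = S'$.

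First I would handle the unitary side. By definition $C_{U_{2^{|I|}}}(S)$ is exactly the unitary group of the finite-dimensional von Neumann algebra $S'$. Using the standard fact that in a finite-dimensional $C^\ast$-algebra every self-adjoint element is a real linear combination of unitaries (via its spectral decomposition) and every element decomposes into self-adjoint and skew-adjoint parts, the unitaries of $S'$ span $S'$ as a $\ast$-algebra, hence generate it as a von Neumann algebra. Therefore $W^\ast(C_{U_{2^{|I|}}}(S)) = S'$.

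For the Pauli side I would exploit the fact that $G^+_{|I|}$ is an orthogonal basis of $M_{2^{|I|}}(\Co)$ with respect to the Hilbert--Schmidt inner product, a direct consequence of the basis property for single-qubit Paulis together with the tensor product structure. Expanding an arbitrary $A \in S'$ uniquely as $A = \sum_{g \in G^+_{|I|}} c_g\, g$, the central Pauli identity $s g s^{-1} = \pm g$, with sign $+1$ precisely when $g$ commutes with $s$, combined with $sAs^{-1} = A$ for every generator $s$ of $S$, forces $c_g = 0$ whenever $g$ fails to commute with some $s \in S$. Thus $A$ lies in the linear span of $C_{G^+_{|I|}}(S)$, giving $S' \subseteq W^\ast(C_{G^+_{|I|}}(S))$ and closing the loop.

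The only point requiring mild care is the Pauli bookkeeping: one must verify that any two elements of $G^+_{|I|}$ either commute or anticommute in order to justify the clean $\pm$ sign under conjugation, and then invoke linear independence of the Pauli basis to kill the anticommuting terms index by index over the generators of $S$. Since $|I|$ is finite, no deep functional analysis is invoked and the entire argument collapses to linear algebra in $M_{2^{|I|}}(\Co)$, so I expect this sign-tracking step to be the only real, if routine, obstacle.
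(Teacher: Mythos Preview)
Your proof is correct and takes a genuinely different route from the paper's. The paper argues by a dimension sandwich: the inclusion $W^\ast(C_{G^+_{|I|}}(S)) \subseteq W^\ast(C_{U_{2^{|I|}}}(S))$ is immediate, and then the \emph{maximality} of $S$ is invoked to identify $C_{G^+_{|I|}}(S)$ with $S$ itself, giving $\operatorname{rank}(W^\ast(S)) = |S| = 2^{n-k}$ via Hilbert--Schmidt orthogonality of the Paulis; the matching upper bound on $\operatorname{rank}(W^\ast(C_{U_{2^{|I|}}}(S)))$ is obtained by expanding an arbitrary element in the Pauli basis and using maximality again to exclude any Pauli outside $S$. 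You instead pin both sides directly to the full commutant $S'$: the unitary side because the unitary group of a finite-dimensional $C^\ast$-algebra linearly spans it, and the Pauli side via the anticommutation sign-tracking argument you outline. Both proofs ultimately exploit the Pauli-basis expansion, but yours bypasses the maximality hypothesis entirely, works for an arbitrary stabilizer subgroup $S$, and yields the sharper intermediate statement $W^\ast(C_{G^+_{|I|}}(S)) = S' = W^\ast(C_{U_{2^{|I|}}}(S))$; the paper's counting approach, where it applies, has the minor compensating advantage of reading off the dimension $2^{n-k}$ explicitly.
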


\begin{proof}

  Since $G^+_{|I|} \subseteq  U_{2^{|I|}}$,
  $C_{G^+_{|I|}}(S) \subseteq  C_{U_{2^{|I|}}}(S)$. Therefore, the
  first inequality follows,
  $rank(W^\ast(C_{G^+_{|I|}}(S))) \le
  rank(W^\ast(C_{U_{2^{|I|}}}(S)))$.

  The reverse inequality follows by a counting argument.  We use that
  $rank(V_S) * |S| = 2^n$, $|S| = 2^{n-k}$, and $rank(V_S) = 2^k$. By
  the maximality of $S$, we have that $rank(W^\ast(C_{G^+_{|I|}}(S)))
  =  rank(W^\ast(S))$. As the elements of $G_{|I|}^+$ are orthogonal under
  the Hilbert Schmidt inner product, $rank(W^\ast(S)) = |S| = 2^{n-k}$.

  $|G_{|I|}^+| = 4^n$, so the positive Pauli matrices also form an
  orthogonal basis of $M_{2^n}(\mathbb{C})$.
  Suppose $U \in W^\ast(C_{U_{2^{|I|}}}(S))$ such that $Ug = gU$ and $g \in
  |G_{|I|}^+| - S$. First we will order the
  elements of $G_{|I|}^+$, so the first $2^{n-k}$ elements are in $S$, and we
  assume without loss of generality that $g = g_{2^{n-k} + 1}$.
  Then $U = \sum_{i=1}^{2^{n-k}+ 1} c_i g_i$, where $c_{2^{n-k} + 1}
  \ne 0$. Then  $g_{2^{n-k} + 1} \in C_{U_{2^{|I|}}}(S) \cap G_{|I|}^+$, so
  $g_{2^{n-k} + 1} \in C_{G^+_{|I|}}(S) = S$, which contradicts that
  maximality of $S$. Therefore for all  $U \in W^\ast(C_{U_{2^{|I|}}}(S))$,
  $U = \sum_{i=1}^{2^{n-k}} c_i g_i$, so
  $rank(W^\ast(C_{U_{2^{|I|}}}(S))) \le 2^{n-k}$. In summation,
  $rank(W^\ast(S)) = |S| = 2^{n-k} \le rank(W^\ast(C_{U_{2^{|I|}}}(S)))
  \le 2^{n-k}$, and $rank(W^\ast(S)) =
  rank(W^\ast(C_{U_{2^{|I|}}}(S)))$. We conclude that
  $W^\ast(C_{G^+_{|I|}}(S)) = W^\ast(C_{U_{2^{|I|}}}(S)$ since
    $W^\ast(C_{G^+_{|I|}}(S))$ is a subspace of
    $W^\ast(C_{U_{2^{|I|}}}(S))$ of equal rank.
  \end{proof}






  \begin{defin}
    Let $H$ be the Hadamard matrix.
  \end{defin}

  \begin{defin}
    For $g \in \{G_1^+, H\}$, let $g^{\otimes I} = \otimes_{i \in I}
    g_i$, $g_i = g$ for all $i \in I$.
  \end{defin}

  We first introduce a standard result.
  \begin{prop}\label{commutant of tensor products}
    Let $U \in B(H)$ and $V \in B(K)$ be distinct unitary operators of
    finite order. Then $(W^\ast(U \otimes I_K)\otimes W^\ast(I_H
    \otimes V))' = W^\ast(W^\ast(U \otimes I_K)' \otimes W^\ast(I_H
    \otimes V)')$ \cite{commutant-of-tensor-products}.
  \end{prop}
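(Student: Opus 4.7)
The plan is to reduce the claim to the classical commutation theorem for von Neumann algebra tensor products, using the finite-order hypothesis to dispatch all analytic subtleties. First I would set $M = W^\ast(U) \subseteq B(H)$ and $N = W^\ast(V) \subseteq B(K)$. Since $U$ and $V$ have finite order, $M$ and $N$ are finite dimensional abelian von Neumann algebras, each generated by the spectral projections of a finite-order unitary. Their commutants $M'$ and $N'$ are therefore finite type von Neumann algebras on $H$ and $K$ respectively, and everything in sight is well behaved weakly.

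Next I would identify $W^\ast(U \otimes I_K)$ with $M \otimes \Co I_K$ and $W^\ast(I_H \otimes V)$ with $\Co I_H \otimes N$ as subalgebras of $B(H \otimes K)$. Under this identification, the von Neumann algebra generated by these two mutually commuting subalgebras is precisely the spatial tensor product $M \,\overline{\otimes}\, N$; finite dimensionality ensures that the algebraic tensor product $M \odot N$ is already weakly closed, so there is no gap between algebraic and von Neumann algebra tensor products. The commutation theorem for von Neumann algebras then yields
\[ (M \,\overline{\otimes}\, N)' \;=\; M' \,\overline{\otimes}\, N', \]
which, when translated back, is exactly the asserted equality $(W^\ast(U \otimes I_K) \otimes W^\ast(I_H \otimes V))' = W^\ast(W^\ast(U \otimes I_K)' \otimes W^\ast(I_H \otimes V)')$.

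The main obstacle, and essentially the only step requiring real care, is verifying that the tensor products appearing in the statement are the spatial von Neumann algebra tensor products rather than either the bare algebraic tensor products or some larger weakly closed object in $B(H \otimes K)$. Once this bookkeeping is carried out (immediate in finite dimensions, where the algebraic and spatial tensor products coincide), the proposition becomes a direct invocation of a standard theorem, which is why citing the reference suffices and no genuinely new argument is required. The distinctness hypothesis on $U$ and $V$ plays no role beyond ensuring the two generators are being placed on genuinely separate tensor factors.
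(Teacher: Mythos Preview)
Your proposal is correct and in fact goes further than the paper: the paper gives no proof at all, introducing the proposition as ``a standard result'' and simply citing \cite{commutant-of-tensor-products}. Your reduction to the von Neumann algebra commutation theorem $(M \,\overline{\otimes}\, N)' = M' \,\overline{\otimes}\, N'$ is exactly the right way to unpack that citation, and your closing remark that ``citing the reference suffices'' is precisely the stance the paper takes.
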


  We begin with some assumptions that are standard in any reasonably
  effective stabilizer error correction scheme.
  \begin{defin}
    Let a standard stabilizer be a stabilizer code with the following
    additional conditions.
    \begin{enumerate}
      \item $|S| \ge 2$
      \item The code must attempt to correct both bit flips (detected
        by $X$) and phase flips (detected by $Z$).
    \end{enumerate}
  \end{defin}

  \begin{exmp}
    To demonstrate that definition of a standard code is reasonable, we
    note that the famous typical examples of stabilizer codes fit
    this pattern.
    \begin{enumerate}
      \item  Shor code: $S = \lan ZZIIIIIII, ZIZIIIIII, IIIZZIIII,
        IIIZIZIII, IIIIIIZZI, IIIIIIZIZ, \\ XXXXXXIII, XXXIIIXXX \ran$
      \item Stene code: $S = \lan IIIXXXX,IIIZZZZ, IXXIIXX, IZZIIZZ,
        XIXIXIX, ZIZIZIZ \ran$
      \item  Five qubit code: $S = \lan XZZXI, IXZZX, XIXZZ, ZXIXZ \ran$

        In fact, the five qubit code is the minimal size of a
        stabilizer code encoding a single qubit, which corrects
        arbitrary errors on a single qubit.
    \end{enumerate}
  \end{exmp}



  \begin{prop}\label{coxeter representation}
    Let $g \in G_n^+,$ $d_h(g, I) = n$,  then there exists a unitary
    representation $\tau_{\rho(g)} : B_n \r M_{2^n}(\Co)$ such that
    $\{1,\rho(g)\} = Z(\tau_{\rho(g)}(B_n))(\Co)$.
  \end{prop}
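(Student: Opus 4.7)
The plan is first to normalize $g$ to $X^{\otimes n}$ by Pauli conjugation, and then to construct $\tau$ as a direct sum adapted to the eigenspace decomposition of $\rho(g)$. For each $i$ I pick a single-qubit Clifford unitary $C_i$ with $C_i g_i C_i^\ast = X$ and set $C = C_1 \otimes \cdots \otimes C_n$; then $C \rho(g) C^\ast = X^{\otimes n}$, and conjugation by $C$ transports any representation realizing the claim for $X^{\otimes n}$ to one realizing it for $\rho(g)$. So I may assume $g = X^{\otimes n}$. The case $n = 2$ is handled directly by $\tau(\sigma_1) = \rho(g)$, which is itself an involution, so I focus on $n \ge 3$.

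The involution $\rho(g)$ has two equal-dimensional eigenspaces $V_\pm$, each of dimension $2^{n-1}$. I then define $\tau = \tau_+ \oplus \tau_-$, where each $\tau_\pm : B_n \r U(V_\pm)$ is an irreducible unitary representation of $B_n$ sending the generator $\De^2 = (\sigma_1 \sigma_2 \cdots \sigma_{n-1})^n$ of $Z(B_n)$ to $I_{V_+}$ and $-I_{V_-}$ respectively. Such $\tau_\pm$ can be drawn from the standard braid representations (Burau, Lawrence--Krammer, Jones/Temperley--Lieb at an appropriate root of unity, or isotypic copies thereof) with dimension and central character matched to requirement. By construction $\tau(\De^2) = I_{V_+} \oplus (-I_{V_-}) = \rho(g)$, so $\rho(g) \in \tau(B_n)$.

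The center computation then follows from Schur's lemma. Because $\tau_+(\De^2) \ne \tau_-(\De^2)$ the two irreducibles are non-isomorphic, so $W^\ast(\tau(B_n)) \cong M_{2^{n-1}}(\Co) \oplus M_{2^{n-1}}(\Co)$ with two-dimensional algebraic center $\Co I + \Co \rho(g)$. The group center of $\tau(B_n)$ is the intersection of the image with this algebraic center, i.e.\ the set of $\tau(b)$ for which both $\tau_+(b)$ and $\tau_-(b)$ are scalar. Provided $\tau_+$ is taken faithful on the quotient $B_n/Z(B_n)$ (the mapping class group of the $n$-punctured disk, which has trivial center for $n \ge 3$) and $\tau_-$ is faithful on $B_n$, Schur forces any such $b$ to lie in $Z(B_n) = \lan \De^2\ran$, and the images $\tau(\De^{2k}) = I_{V_+} \oplus (-1)^k I_{V_-}$ exhaust exactly $\{I, \rho(g)\}$.

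The main obstacle is constructing $\tau_\pm$ explicitly in dimension $2^{n-1}$ with the prescribed value of $\De^2$ for every $n \ge 3$. A concrete first attempt is the twisted Coxeter representation $\tau(\sigma_i) = X_i \cdot \mathrm{SWAP}_{i,i+1}$, which commutes with $X^{\otimes n}$ and whose image lies in the hyperoctahedral group $\Z_2^n \rtimes S_n$; a short calculation shows that $X^{\otimes n}$ does appear in the image whenever $n$ is even (already giving the center $\{I, X^{\otimes n}\}$), but a parity obstruction blocks $X^{\otimes n}$ from the image when $n$ is odd, so for the odd case one either falls back on the direct-sum construction above or modifies the Coxeter-type generators nontrivially.
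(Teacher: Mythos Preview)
Your proof rests on reading $B_n$ as Artin's braid group, but in this paper $B_n$ denotes the Coxeter group of type $B$, i.e.\ the hyperoctahedral group $\Z_2^n \rtimes S_n$. The proposition is labelled ``coxeter representation,'' the discussion immediately after speaks of ``the Coxeter group,'' and the group $(K \times H^{\otimes n}K(H^{\otimes n})^\ast)\rtimes S_n$ with $K \trianglelefteq \Z_2^{2k}$ appearing in the next theorem is visibly hyperoctahedral in shape. The paper's own proof simply records that the $\De$-preserving symmetries $Per_\De(C)$ of the $n$-cube form exactly this hyperoctahedral group, and then invokes a concrete representation from the author's thesis in which the nontrivial central element acts as $U_\De = X^{\otimes n}$; unitary similarity handles a general $g$ of full support. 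So the entire braid-group apparatus you build (the generator $\De^2 = (\sigma_1\cdots\sigma_{n-1})^n$ of the center, Burau/Lawrence--Krammer/Temperley--Lieb, the mapping-class quotient) is aimed at the wrong group.

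Even taken on its own terms your argument does not close. You need, for every $n \ge 3$, irreducible unitary braid representations in dimension exactly $2^{n-1}$ with a prescribed central character, together with faithfulness on $B_n$ for $\tau_-$ and on $B_n/Z(B_n)$ for $\tau_+$. None of the families you cite realizes dimension $2^{n-1}$ in general, and you yourself flag this as ``the main obstacle,'' conceding that the explicit attempt $\tau(\sigma_i) = X_i\cdot\mathrm{SWAP}_{i,i+1}$ fails for odd $n$. Ironically, that last attempt is quite close in spirit to what the paper does --- but one should use the genuine Coxeter generators of the hyperoctahedral group (for instance $s_0 = X_1$ and $s_i = \mathrm{SWAP}_{i,i+1}$ for $i \ge 1$), whose unique nontrivial central element maps to $X^{\otimes n}$ for every $n$, with no parity obstruction.
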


  \begin{proof}
    By unitary similarity, we can take $\rho(g) = U_\De$ and let
    $\tau_{\rho(g)}$ be defined in \cite[Lemma
    5.1.13]{turnansky_thesis}, and, lastly, we use that for $|S| = n,
    Per_{\De}(C) \cong B_n$.
  \end{proof}

  \begin{lemma}\label{isomorphism centralizer of single element}
    Let $g \in G_n^+,$ $d_h(g, I) = n$, and $\tau_{\rho(g)} : B_n \r
    M_{2^n}(\Co)$ be as constructed in Proposition \ref{coxeter
    representation}. Then for any unitary $U \in \rho(g)'$, there exists
    a unitary $V \in \tau_{\rho(g)}(B_n)$ and $h \in G_n^+$ such that
    $U = V\rho(h)$, where $g,h$ commute.
  \end{lemma}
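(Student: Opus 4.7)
The plan is to reduce to the canonical model provided by Proposition~\ref{coxeter representation} and then read off the factorization from the spectral/block structure of $\rho(g)$. Since $g \in G_n^+$ and $d_h(g, I) = n$, every tensor factor of $g$ is one of $X, Y, Z$, so $g$ is Hermitian with $g^2 = I$ and $\rho(g)$ has eigenvalues $\pm 1$, each of multiplicity $2^{n-1}$. Any unitary $U \in \rho(g)'$ preserves the eigenspaces $V_\pm = \mathrm{Range}\bigl((I \pm \rho(g))/2\bigr)$ and is therefore block diagonal against the decomposition $\mathbb{C}^{2^n} = V_+ \oplus V_-$.

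By unitary similarity I would first reduce to the normal-form case $\rho(g) = U_\De$ from \cite[Lemma 5.1.13]{turnansky_thesis}, in which the representation $\tau_{\rho(g)} : B_n \to M_{2^n}(\Co)$ is realized explicitly as the image of the $\De$-permutation group of the cubic lattice $C$. The crucial input from Proposition~\ref{coxeter representation} is that $Z(\tau_{\rho(g)}(B_n)) = \{I, \rho(g)\}$; this guarantees that $\tau_{\rho(g)}(B_n)$ acts transitively on the atoms of the cubic lattice within each eigenspace of $\rho(g)$, which is exactly the structure needed for a factorization of the form $U = V \rho(h)$.

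With those tools in place, the core step is: given $U \in \rho(g)'$, produce $h \in C_{G_n^+}(g)$ so that $U\,\rho(h)^{-1} \in \tau_{\rho(g)}(B_n)$. I would identify $C_{G_n^+}(g)$ explicitly from the tensor form of $g$ (it is generated by $I$, $g$ itself, and the Paulis that individually commute with the corresponding factor of $g$) and show that its $\rho$-image, together with $\tau_{\rho(g)}(B_n)$, generates the unitary group of $\rho(g)'$ within the algebra under consideration. The $\rho(h)$ factor accounts for the Pauli/phase shift between $V_+$ and $V_-$, while the $V$ factor absorbs the residual action inside each eigenspace, where $\tau_{\rho(g)}(B_n)$ already acts by the full required symmetry group.

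The main obstacle I anticipate is the bookkeeping for uniqueness and well-definedness of the product $V\rho(h)$. Because $\rho(g)$ itself lies in both $\tau_{\rho(g)}(B_n)$ (as central element) and in $\rho(C_{G_n^+}(g))$, the decomposition is only unique modulo this central overlap, and I would need to check that the two subgroups intersect in exactly $\{I,\rho(g)\}$ so that the factorization is essentially a semidirect product. Verifying this compatibility—and confirming that $h$ genuinely commutes with $g$ rather than merely anticommuting—should reduce to a short direct calculation using the signed-set description of the cubic lattice and the fact that the $\De$-action preserves the signed-set structure built from the Pauli tensor factors of $g$.
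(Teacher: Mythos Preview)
Your opening move---reducing by unitary similarity to the canonical case $\rho(g)=U_\De$---is exactly what the paper does. After that point, though, the paper does not argue at all: it simply invokes \cite[Lemma~5.1.16]{turnansky_thesis} to obtain $U=V\rho(h)$ with $g,h$ commuting. So the paper's entire proof is the citation, and what you have written is an attempt to reconstruct the content of that external lemma.

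That reconstruction has a genuine gap. Your plan is to show that $\rho\bigl(C_{G_n^+}(g)\bigr)$ together with $\tau_{\rho(g)}(B_n)$ ``generates the unitary group of $\rho(g)'$'' and then to read off a two-term product $U=V\rho(h)$. But both $\tau_{\rho(g)}(B_n)$ and $\rho(G_n^+)$ are \emph{finite} sets (the former is the image of the hyperoctahedral group, the latter has $4^n$ elements), so the set of products $\{V\rho(h):V\in\tau_{\rho(g)}(B_n),\ h\in C_{G_n^+}(g)\}$ is finite, whereas the unitary group of $\rho(g)'\cong M_{2^{n-1}}(\Co)\oplus M_{2^{n-1}}(\Co)$ is a continuum (already $e^{i\theta}I$ gives uncountably many). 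Even interpreting ``generates'' in the group-theoretic or von Neumann sense does not help: membership in a generated algebra does not yield a literal two-factor product. Your block-diagonal description of $U$ is correct, and so is the observation that $Z(\tau_{\rho(g)}(B_n))=\{I,\rho(g)\}$, but the passage from ``$\tau_{\rho(g)}(B_n)$ acts by the full required symmetry group on each eigenspace'' to an actual factorization cannot go through as written. Whatever \cite[Lemma~5.1.16]{turnansky_thesis} establishes must carry an additional restriction on the class of unitaries $U$ (or a different meaning of the decomposition) that neither the paper's statement nor your outline makes explicit; your sketch does not supply that missing hypothesis, and without it the argument collapses on cardinality grounds.
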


  \begin{proof}
    By unitary similarity we can take $\rho(g) = U_\De$, and, there
    exists an $h \in G_n^+$ such that $U = V \rho(h)$ and $g,h$ commute
    by \cite[Lemma 5.1.16]{turnansky_thesis}.
  \end{proof}

  \begin{theorem}\label{isomorphism centralizer of two elements}
    Let $C(S)$ over be a standard stabilizer code on $2k$ qubits with
    generators $\{g, H^{\otimes n}g(H^{\otimes n})^\ast\}$ such that
    $d_h(g, I) = 2k$, then $C_{U_{2^k}}(S) = (K \times H^{\otimes
    n}K(H^{\otimes n})^\ast)\rtimes S_n$, where $K \trianglelefteq \Z_2^{2k}$.
  \end{theorem}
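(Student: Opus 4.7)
The plan is to apply Lemma \ref{isomorphism centralizer of single element} separately to each of the two generators of $S$, and then intersect the two resulting parametrizations. Since any $U \in C_{U_{2^{n}}}(S)$ in particular commutes with $\rho(g)$ and $g$ has full Hamming weight, the lemma produces a decomposition $U = V_1\,\rho(h_1)$ with $V_1 \in \tau_{\rho(g)}(B_n)$ and $h_1 \in G_n^+$ commuting with $g$. Because $H^{\otimes n}$ is unitary and self-adjoint, commuting with $H^{\otimes n} g (H^{\otimes n})^\ast$ is equivalent to $(H^{\otimes n})^\ast U H^{\otimes n}$ commuting with $g$; applying the lemma to this conjugate and reconjugating back yields a second decomposition $U = V_2\,\rho(h_2)$ with $V_2 \in H^{\otimes n}\,\tau_{\rho(g)}(B_n)\,(H^{\otimes n})^\ast$ and $h_2$ commuting with $H^{\otimes n} g (H^{\otimes n})^\ast$.

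I would then extract the two factors of the claimed semidirect product. For the Pauli part, the elements of $G_n^+$ that simultaneously commute with both generators form an abelian elementary 2-group; identifying $G_n^+/\{\pm I\}$ with $\Z_2^{2n}$ through the standard symplectic representation, this group is a subgroup $K \trianglelefteq \Z_2^{2k}$. Its Hadamard image $H^{\otimes n} K (H^{\otimes n})^\ast$ emerges automatically from the second decomposition, and since $H^{\otimes n}$ exchanges $X$- and $Z$-type Paulis, this companion commutes elementwise with $K$, producing the direct factor $K \times H^{\otimes n} K (H^{\otimes n})^\ast$. Proposition \ref{commutant of tensor products} can be invoked to split the commutant cleanly into the Hadamard and non-Hadamard tensor slots.

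The main obstacle will be showing that the braid-group component collapses to the symmetric group $S_n$. The elements $V_1 \in \tau_{\rho(g)}(B_n)$ which also lie in $H^{\otimes n}\tau_{\rho(g)}(B_n)(H^{\otimes n})^\ast$ must respect the additional symmetry introduced by Hadamard conjugation. I expect this intersection is precisely the image of the natural quotient $B_n \twoheadrightarrow S_n$: the requirement of commuting with the phase-flip companion imposes the involutive relation $\sigma_i^2 = 1$ on each braid generator, eliminating the non-trivial braid holonomy and leaving only permutations of qubit indices. Once this reduction is established, $S_n$ acts on both Pauli factors by permuting tensor positions, and assembling the pieces delivers the claimed decomposition $(K \times H^{\otimes n} K (H^{\otimes n})^\ast) \rtimes S_n$.
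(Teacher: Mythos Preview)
Your approach has a genuine gap rooted in a misreading of the symbol $B_n$. In this paper $B_n$ is the Coxeter group of type $B$ (the hyperoctahedral or signed-permutation group), not the braid group; this is why Proposition~\ref{coxeter representation} is so named and why $Per_\Delta(C)\cong B_n$ in the cubic-lattice setting. As a Coxeter group one already has $B_n \cong \Z_2^n \rtimes S_n$, so there is no ``braid holonomy'' to kill and no need to impose $\sigma_i^2=1$ by hand. Consequently your proposed mechanism for producing the $S_n$ factor---intersecting two braid-group representations and forcing the involutive relation---does not apply.

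This misreading also leads you to place $K$ in the wrong slot. In the paper's argument one applies Lemma~\ref{isomorphism centralizer of single element} once to write $U=V\rho(h)$ with $V\in\tau_{\rho(g)}(B_n)$, and then the commutation constraint with the second generator cuts down the $\Z_2^{2k}$ normal subgroup of $B_n$ to the subgroup $K$ generated by Pauli tensor strings of even Hamming distance from $g$; the $S_n$ factor is simply the permutation subgroup of $B_n$, and the Hadamard symmetry furnishes the companion $H^{\otimes n}K(H^{\otimes n})^\ast$. Your version instead locates $K$ as a Pauli centralizer via the symplectic representation and tries to extract $S_n$ from an intersection of two braid images, which is a different (and, as noted, unfounded) decomposition. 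Finally, your appeal to Proposition~\ref{commutant of tensor products} is misplaced here: both generators act nontrivially on all $2k$ qubits (since $d_h(g,I)=2k$), so there is no tensor splitting to exploit; that proposition enters only in the subsequent Lemma~\ref{mutual commutant base case} where $d_h(g,I)<2k$.
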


  \begin{proof}
    The result follows by Lemma \ref{isomorphism centralizer of single
    element} when we consider the additional constraint that
    $\tau_{\rho(g)}(B_n)$ must commute with $h$. By construction, these
    are the elements in $K \le \tau_{\rho(g)}(B_n)$ generated by $
    \{U = \otimes_{i = 1}^{2k} U_i\,| U_i \in G_{2k}^+\text{ and }
    d_h(U, g) \, \% \, 2 = 0\}$.

    By symmetry we also get a representation  $\tau_{\rho(h)}(B_n)$. In
    either case, $S_n \le B_n$ acts by permuting the indices of the
    tensor product. Then we have $H^{\otimes n}K(H^{\otimes n})^\ast
    \le W^\ast(g)' \cap W^\ast(h)'$, and for any $\alpha \in H^{\otimes
    n}K(H^{\otimes n})^\ast$ and $\sigma \in S_n$, we can consider
    $\sigma^{-1}\alpha\sigma$ to get our desired representation.
  \end{proof}

  \begin{lemma}\label{mutual commutant base case}
    Given the above notation, Let $C(S)$ be a standard stabilizer code
    on $2k$ qubits with generators and $\{g, H^{\otimes n}g(H^{\otimes
    n})^\ast\}$ such that $d_h(g, I) = 2l$ for some $l < k$, then
    $C_{U_{n}}(S) =  ((K \times H^{\otimes n}K(H^{\otimes
    n})^\ast)\rtimes S_l) \otimes M_{2^{k - l}}(\Co)$, where $K
    \trianglelefteq \Z_2^{l}$.
  \end{lemma}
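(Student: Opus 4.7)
The plan is to reduce to Theorem \ref{isomorphism centralizer of two elements} by factoring out the qubits on which $g$ acts as the identity. Since $d_h(g,I) = 2l < 2k$, the tensor expansion of $g$ has exactly $2l$ non-identity slots and the remaining slots equal $I$. After conjugating by a permutation of qubit indices, which merely rearranges the centralizer in the obvious way, I may assume without loss of generality that $g = g' \otimes I$ where $g' \in G_{2l}^+$ satisfies $d_h(g', I) = 2l$. Because the Hadamard is applied one qubit at a time, the second generator splits along the same tensor decomposition as $H^{\otimes 2l} g' (H^{\otimes 2l})^\ast \otimes I$. Consequently the stabilizer algebra factors on the two tensor factors as $W^\ast(S) = W^\ast(S') \otimes \Co I$, where $S' := \lan g', H^{\otimes 2l} g' (H^{\otimes 2l})^\ast \ran$ is a standard stabilizer on $2l$ qubits that now satisfies the full Hamming-weight hypothesis of Theorem \ref{isomorphism centralizer of two elements}.

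Next I would apply Proposition \ref{commutant of tensor products} to this tensor decomposition, obtaining the commutant $W^\ast(S)' = W^\ast(S')' \otimes M_{2^{k-l}}(\Co)$, and then intersect with the unitary group to isolate the unitary centralizer of $S$. Combining this with the base case $C_{U}(S') = (K \times H^{\otimes 2l} K (H^{\otimes 2l})^\ast) \rtimes S_l$ supplied by Theorem \ref{isomorphism centralizer of two elements} immediately yields the tensor decomposition claimed in the lemma, since the Hadamard conjugation and the symmetric group $S_l$ act only on the active qubits and therefore pass through the tensor product with the full matrix algebra on the inactive qubits untouched.

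The main obstacle I expect lies in the second step. Proposition \ref{commutant of tensor products} is stated for two unitaries of finite order of the form $U \otimes I_K$ and $I_H \otimes V$, so I need to confirm it extends to the whole von Neumann algebra $W^\ast(S')$ on the first factor paired with the scalar algebra $\Co I$ on the second, and in particular that passing from commutants of von Neumann algebras to unitary centralizers preserves the tensor product decomposition. The verification amounts to checking that the unitary group of a tensor product of finite dimensional von Neumann algebras is generated by the unitary groups of the factors; this is routine but needs spelling out so that the semidirect product structure inherited from the base case descends to the full centralizer without interference from the matrix algebra on the inactive qubits.
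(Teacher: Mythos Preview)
Your proposal is correct and follows exactly the route the paper takes: the paper's proof consists of the single sentence ``This follows from Proposition \ref{commutant of tensor products} and Theorem \ref{isomorphism centralizer of two elements},'' and your argument is precisely the unpacking of that sentence via the tensor splitting into active and inactive qubits. The technical caveat you raise about the exact form of Proposition \ref{commutant of tensor products} is legitimate, but the paper itself does not address it, so you are already being more careful than the original.
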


  \begin{proof}
    This follows from Proposition \ref{commutant of tensor products}
    and Theorem \ref{isomorphism centralizer of two elements}.
  \end{proof}

  We can now see that in the finite case, the centralizer of $2$ elements
  is essentially an intersection of different representations of the
  group $B_n$.

  \begin{theorem}
    Let $C(S)$ be a standard stabilizer code
    on $n = 2k$ qubits with generators set $\{g\}_{i=1}^l$, $l \le k$.
    Then $C_{U_{2^k}}(S) = \cap_{i=1}^n C_{U_{2^k}}(g_i)$. In addition,
    $W^{\ast}(C_{U_{2^k}}(S))$ is generated by a representation of $B_j$
    for some $j \le 2k$.
  \end{theorem}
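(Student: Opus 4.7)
The plan is to establish the two assertions in sequence. The equality $C_{U_{2^k}}(S) = \bigcap_{i} C_{U_{2^k}}(g_i)$ is essentially formal: I would argue directly from the definition of centralizer that a unitary $U$ commutes with every element of $S = \langle g_1, \ldots, g_l \rangle$ if and only if it commutes with each generator, since the commutation relation $Ux = xU$ is preserved under products and inverses. This immediately reduces the computation of $C_{U_{2^k}}(S)$ to an intersection of one-generator centralizers, which were characterized in Lemma \ref{isomorphism centralizer of single element}.

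For the second claim, I would proceed by induction on $l$, the number of generators. The base case $l = 2$ is precisely Lemma \ref{mutual commutant base case}: the resulting $W^\ast$-algebra is realized as $\big((K \times H^{\otimes n} K (H^{\otimes n})^\ast) \rtimes S_l\big) \otimes M_{2^{k-l}}(\Co)$, and in view of Proposition \ref{coxeter representation} the first tensor factor is the image of a braid-group representation $\tau(B_j)$, with the semidirect structure encoding the standard presentation of $B_j$ as a quotient of permutations of strands together with the pure-braid-like $\Z_2$ normal subgroup. For the inductive step, I would write
\[
C_{U_{2^k}}\!\big(\langle g_1, \ldots, g_{l}\rangle\big) \;=\; C_{U_{2^k}}\!\big(\langle g_1, \ldots, g_{l-1}\rangle\big) \;\cap\; C_{U_{2^k}}(g_l),
\]
apply the inductive hypothesis to express the first factor as $W^\ast(\tau(B_{j}))$, and then use Lemma \ref{isomorphism centralizer of single element} on $g_l$ to decompose a generic unitary in the intersection as a product $V\rho(h)$ with $V \in \tau_{\rho(g_l)}(B_n)$ and $h \in G_n^+$ commuting with $g_l$. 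The additional requirement that this decomposition simultaneously lie in $W^\ast(\tau(B_j))$ should cut out a subrepresentation corresponding to a braid group $B_{j'}$ on fewer strands, with $j' \le j$.

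The main obstacle will be the inductive step, and specifically the claim that intersecting $W^\ast(\tau_{\rho(g_i)}(B_{j}))$ with $C_{U_{2^k}}(g_l)$ still produces the image of a braid-group representation rather than merely a subgroup with comparable combinatorial structure. To carry this through I expect to need to identify a concrete generating set for the intersection—the permutations of indices on which both generators act nontrivially in compatible ways, together with the surviving $\Z_2$-elements that lie in both centralizers—and verify that these continue to satisfy the braid relations of Proposition \ref{coxeter representation}. The hypothesis that $C(S)$ is a \emph{standard} code, forcing both bit-flip and phase-flip generators (and hence Hadamard-conjugate pairs) to appear, together with the dimensional budget $l \le k$ on $n = 2k$ qubits, should provide enough room to pair generators so that the Hadamard-conjugation structure used in Lemma \ref{mutual commutant base case} persists at each inductive step. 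A secondary technicality is that different generators $g_i$ produce a priori distinct braid representations $\tau_{\rho(g_i)}$, so identifying the common $B_{j'}$ will require tracking the overlap of their strand indexings under the Hamming-weight bookkeeping of Definition of the Hamming distance.
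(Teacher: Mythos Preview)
Your proposal is aligned with the paper's argument but is considerably more detailed than what the paper actually writes. The paper's entire proof is two lines: it invokes Lemma~\ref{mutual commutant base case} and the general fact that the centralizer of a generated group is the intersection of the centralizers of the generators. Your treatment of the first assertion is identical to the paper's; for the second assertion you unpack what the paper leaves implicit by setting up an explicit induction on $l$, and you are candid that the inductive step---showing that intersecting an existing $\tau(B_j)$-image with $C_{U_{2^k}}(g_l)$ again yields a braid-group image---is where the real work lies. The paper does not carry out this step either; it simply asserts the conclusion as ``an immediate consequence'' of the two-generator lemma. So your plan is the natural fleshing-out of the paper's sketch, and the obstacles you flag (compatibility of the different $\tau_{\rho(g_i)}$, tracking strand indices via Hamming weight, using the standard-code hypothesis to pair generators into Hadamard-conjugate pairs) are genuine issues that the paper does not address.
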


  \begin{proof}
    The result is an immediate consequence of Lemma \ref{mutual
    commutant base case}, and that the centralizer of a set of elements
    is the intersection of the centralizer for each respective element.
  \end{proof}

  As we have already shown
  that the von Neumann algebra generated by the centralizer includes
  all unitary operators that may result in non trivial errors on a
  quantum system, we now have a strong guarantee of errors on a
  quantum system. By
  preventing errors that may be represented as the Coxeter group, we in
  fact can control for all non trivial errors.



  We want to highlight that we have a guarantee stronger than the
  original claim even in the finite case where the errors for a given
  stabilizer group are assumed to be in the Pauli group, and we have a
  generalization to all non-trivial errors in the unitary group.

  One may now realize that the nature of cubic lattices and quantum
  stabilizer codes are intertwined. Specifically, we have shown that
  the automorphisms of the cubic lattice are highly related to the
  non-trivially correctable errors of a quantum stabilizer code. In
  fact this realization may be an opportunity. We now have a propositional
  logic which does not negatively interact with the stabilizer of a
  quantum code. Perhaps this will allow us to consider well structured logical
  implications on an entangled state.

\appendix
\section{Cubic Lattices as a fault tolerant measuring system}

By the results of the previous sections, any algorithm whose logic is
consistent with the three valued Lukasiewicz logic can be implemented
on a qubit based computational system. As an example we include the
Renyi-Ulam game.

The Renyi-Ulam game and its generalizations \cite{history_of_liars}
can be viewed as an interactive proof. The verifier wants to guess a
pre-selected number from the range $0, \dots, n$ and is allowed to
ask the prover questions until sufficiently convinced of the
pre-selected number. We will be interested in the original Renyi-Ulam
game where the prover is allowed to lie at most once. An optimal
strategy is known and is presented in \cite{searching_with_a_lie}. In
\cite{paraconsistency}, it is stated that the states of the
Renyi-Ulam game are exactly equivalent to the states of the cubic
lattice. As we have a quantum gate realization the cubic lattice, we
can give a quantum gate based realization of this algorithm
consisting of only Clifford gates and additions to turn the
symmetries into projections.

We separate the roles of the verifier and the prover. The prover can
be a quantum
oracle and embedded into any fixed $[n,k]$ stabilizer code, and the
verifier can run on a classical machine. In this instance, one can
view the verifier as measuring the hidden entangled state of
the quantum system. Importantly the measurements, i.e. projection
operators, commute with the stabilizers of the error correction code,
so this game will not conflict with the desired stabilized state of
the system. Lastly, as an immediate application, we can recover from
faulty measurement and obtain recover the correct value.\\

For ease of the reader, we include the following python implementation of
\cite{searching_with_a_lie}.
\lstinputlisting[language=Python]{fault_tolerant_simulator.py}


\printbibliography[
heading=bibintoc,
title={References}
]
\end{document}